\documentclass[letterpaper, 10 pt, conference]{ieeeconf}

\IEEEoverridecommandlockouts
\usepackage[utf8]{inputenc}
\usepackage{amsmath,amssymb}
\usepackage{graphicx}
\graphicspath{{figures/}}
\usepackage{cite}
\usepackage{algorithm}
\usepackage{algpseudocode}
\makeatletter
\newcommand\fs@ruledtop{\fs@ruled
  \def\@fs@pre{\kern5pt\hrule height.8pt depth0pt \kern2pt}}
\makeatother
\floatstyle{ruledtop}
\restylefloat{algorithm}
\usepackage[hidelinks]{hyperref}

\DeclareMathOperator{\tr}{tr}

\makeatletter
\def\th@plain{\itshape}
\def\@begintheorem#1#2{\trivlist
  \item[\hskip\labelsep\bfseries #1\ #2.]}
\def\@opargbegintheorem#1#2#3{\trivlist
  \item[\hskip\labelsep\bfseries #1\ #2\ (#3).]}
\makeatother

\newtheorem{theorem}{Theorem}
\newtheorem{lemma}[theorem]{Lemma}

\newtheorem{corollary}[theorem]{Corollary}
\newtheorem{assumption}{Assumption}

\title{\LARGE \bf Policy Improvement From Private Information For Decentralized LQG Teams}

\author{Hung Nguyen$^{1}$
and Andreas A. Malikopoulos$^{2}$,
\textit{Senior Member, IEEE}%
\thanks{This research was supported in part by NSF under Grants
CNS-2401007, CMMI-2348381, IIS-2415478, and in part by MathWorks.}%
\thanks{$^{1}$Hung Nguyen is with the Center for Applied Mathematics, Cornell University, Ithaca, NY, USA. (email: \texttt{htn33@cornell.edu}).}%
\thanks{$^{2}$Andreas A. Malikopoulos is with the Applied Mathematics, Systems Engineering, Robotics, Mechanical Engineering, Electrical and Computer Engineering, and Civil \& Environmental Engineering, Cornell University, Ithaca, NY, USA. (email: \texttt{amaliko@cornell.edu}).}}

\begin{document}

\maketitle
\thispagestyle{empty}
\pagestyle{empty}

\begin{abstract}
Sharing of information in multiagent systems allow for improved coordination. Delays in communication can cause asymmetries in information, causing classical results to fail. We study the decentralized LQG problem with delayed information. In general, the optimal policies are not linear and are difficult to calculate. Restricting to linear policies, we analytically calculate the gradient descent for gains from private information. This naturally lends itself to a gradient descent algorithm which we apply to examples to show policy improvements.
\end{abstract}

\section{Introduction}
\label{sec:introduction}
Information structures are defining features which underly decentralized control problems. The importance can be seen when compare results for the Linear-Quadratic-Gaussian (LQG) multiagent control problem under different information structures. Under centralized information, the problem simplifies to that of the single controller which classical results provide a tractable optimal control policy which is linear \cite{wonham1968separation}. For partially nested information, the optimal control policy is still linear but the calculation can be more involved \cite{hochu1972team}. Under delayed information sharing, in general, the optimal control policy is no longer guaranteed to be linear and even the optimal linear controller is unknown \cite{varaiya1978delayed}.

To see how nonclassical information structures, such as delayed information sharing, appear in real world systems, we can first consider the difficulties in the classical information structure. A centralized controller may require information to travel across a large distributed system, which causes concerns over the time it takes to make decisions as well as the security of the information as it travels. In smart power grids, this centralization creates a possible single point of failure in critical infrastructure, which makes makes decentralized architecture more attractive from the viewpoint of privacy, robustness, and security \cite{sun2024trilemma,sousadias2023cybersecurity}.

In this paper, we consider delayed sharing information structure in the decentralized LQG problem. Early works starting in 1974 first studied the problem with a single step of delay \cite{sandell1974nonclassical,kurtaran1974lqg}. Further works analyzed and characterized the results for the one-step delay problem and revealed that moving to the two-step delay problem caused many of these properties to no longer hold \cite{varaiya1978delayed,yoshikawa1978separation}. Following these findings, there is large gap in the literature until the 2010's where the problem sees renewed interest \cite{nayyar2011delayed,nayyar2013common,mahajan2015sufficient,malikopoulos2023}.

The most relevant result is the recent characterization of the optimal linear controller for the finite horizon problem \cite{xu2023delayed}. Though the controller is intractable, the limiting process reveals conditions that good candidates for the optimal linear controller should have.

To find further improvements to such candidate policies, we consider a policy gradient method. Previous works have considered the policy gradient approach for the classical LQR problem \cite{fazel2018} and also the decentralized information structure case \cite{furieri2020firstorder}.

Our main contributions are the following:
\begin{enumerate}
  \item We develop a method to calculate an analytic gradient in the cost with respect to a linear policy update solely on the private information available to team members within our class of admissible policies. This isolates exactly where the the improvements are coming from.
  \item We show that the gradient can be interpretted as a least-squares problem.
  \item We provide examples showing the significant cost reductions obtainable from performing gradient descent using the analytic calculation.
\end{enumerate}

\section{Problem Formulation}
\label{sec:problem}
\subsection{System Model and Information Structure}
We consider a two-member team with the following linear time-invariant system:

\begin{equation}
\begin{aligned}
X_{t+1}
&= AX_t+B^1U_t^1+B^2U_t^2+W_t,\\
Y_t^k
&= C^kX_t+V_t^k,\qquad k\in\{1,2\},\\
\end{aligned}
\label{eq:dynamics}
\end{equation}

\noindent where $X_t \in \mathbb{R}^{d_X}$, $U_t^k \in \mathbb{R}^{d_U^k}$, and $Y_t^k \in \mathbb{R}^{d_Y^k}$. The initial state $X_0$ is drawn a known gaussian distribution. The primitive noises $W_t\sim\mathcal{N}(0,\Sigma_W)$ and $V_t^k\sim\mathcal{N}(0,\Sigma_{V^k})$ have covariance $\Sigma_W \succeq 0$ and $\Sigma_{V_t^k} \succ 0$. The initial state and primitive noises are mutually independent.

For notation purposes, we introduce the following notation to collect quantities and organize.

\begin{equation*}
\begin{gathered}
U_t^M=\operatorname{col}(U_t^1,U_t^2),\quad
Y_t^M=\operatorname{col}(Y_t^1,Y_t^2),\\
B^M=[B^1\ B^2],\quad C^M=\operatorname{col}(C^1,C^2),\\
\Sigma_{V^M}=\operatorname{diag}(\Sigma_{V^1},\Sigma_{V^2}).
\end{gathered}
\end{equation*}

The superscript $M$ allows convenient rewriting of the system, to describe the different team members together as if we were viewing over them as a manager.

Fix the delay in information sharing to be $d \geq 2$. Each team member obtains their individual observations immediately, however their observations and actions are only shared to others after $d$ steps. We describe this difference in information as the common information and private information.

\begin{equation}
\begin{aligned}
\Delta_t
&:= \left(Y_{0:t-d}^k,\ U_{0:t-d}^k\right)_{k\in\{1,2\}},\\
\Lambda_t^k
&:= \left(Y_{t-d+1:t}^k,\ U_{t-d+1:t-1}^k\right),
\qquad k\in\{1,2\}.
\end{aligned}
\label{eq:information}
\end{equation}

The team members are trying to optimized the following shared infinite horizon time-average cost:

\begin{equation}
J(\pi)=\limsup_{T\to\infty}\frac{1}{T}
\sum_{t=0}^{T-1}
\mathbb{E}_{\pi}\!\left[
X_t^\top QX_t+(U_t^M)^\top RU_t^M
\right].
\label{eq:cost}
\end{equation}

\noindent where $Q \succeq 0$ and $R=\operatorname{diag}(R^1,R^2) \succ 0$.

\subsection{Baseline Policy}
Under its solvability condition, the optimal linear controller result in \cite{xu2023delayed} for the finite horizon problem finds that the gain on the common state estimator is determined from matrices satisfying Riccati equations. Assuming that the limiting process produces a convergent steady-state solution, the intractable forward-backward equations that characterize the needed matrices are replaced by steady-state equations.

We will only consider optimizing policies which satisfy these steady-state equations.

\begin{assumption}[Stable Stationary Baseline]
\label{as:stable-stationary-baseline}
Assume that a candidate policy $\pi(0)$ is given such that

\begin{enumerate}
  \item The policy is admissible and stationary.
  \item There are constant matrices $P$ and $K$ such that $P \succeq 0$ and
  \begin{equation}
  \begin{aligned}
  &\mathbb E_{\pi(0)}[U_t^M\mid\Delta_t]=K\widehat X_t,\\
  &\widehat X_t=\mathbb E_{\pi(0)}[X_t\mid\Delta_t],
  \end{aligned}
  \label{eq:common-mean}
  \end{equation}
  and
  \begin{equation}
  \begin{aligned}
  &P=Q+K^\top RK+A_c^\top PA_c,\\
  &0=RK+(B^M)^\top PA_c,\\
  &A_c:=A+B^MK.
  \end{aligned}
  \label{eq:steady-state}
  \end{equation}
  \end{enumerate}

\end{assumption}

Though \eqref{eq:common-mean} is an estimate using policy $\pi(0)$, we will see that \eqref{eq:steady-state} still holds for the estimates under the policy improvement class that we will consider.

\subsection{Policy Improvement Class}
We are looking for linear policy improvements of our initial policy $\pi(0)$.

\begin{equation}
\begin{aligned}
&U_t^k(\Theta)=U_t^k(0)+\Theta^k\eta_t^k,\\
&\eta_t^k:=\operatorname{col}_{a=0}^{d-1}
\left(Y_{t-a}^k-\widehat Y_{t-a\mid t}^k\right).
\end{aligned}
\label{eq:corrected-policy}
\end{equation}

We define $\Theta$ to be the collection of policy changes $\Theta^k$ for each team member, and denote the updated policy as $\pi(\Theta)$. We see that for $\Theta = 0$, we obtain the starting baseline policy $\pi(0)$. Using this notation, we will denote the cost function under the updated policy $\pi(\Theta)$ as,

\begin{equation}
J(\Theta):=J(\pi(\Theta)).
\label{eq:parameterized-cost}
\end{equation}

We will supress the argument $\Theta$ and $\Theta^k$ when the dependence is clear.

\section{Gradient and Least-Squares Formulation}
\label{sec:main}
We want to optimize the gain on private information. We are motivated by the first-order condition

\begin{equation}
\nabla_{\Theta^k}J(\Theta)=0,\qquad k\in\{1,2\}.
\label{eq:stationarity}
\end{equation}

This condition checks that no individual team member can improve the policy by solely changing their own gains on private information. Non-zero gradients will indicate that we can update the policy and descend in cost. To calculate this gradient, we will first transition to delayed error coordinates.

\subsection{Delayed Error Coordinates}
We start by defining the variables needed for the augmented system.

\begin{equation}
\begin{aligned}
\bar X_t
&=\operatorname{col}(X_t,Y_{t-1}^M,\ldots,Y_{t-d+1}^M,
U_{t-1}^M,\ldots,U_{t-d+1}^M),\\
e_t
&=\bar X_t-\widehat{\bar X}_t,\qquad
\widehat{\bar X}_t
=\mathbb E_{\pi(\Theta)}[\bar X_t\mid\Delta_t],\\
\xi_t
&=\operatorname{col}(W_t,V_t^M),\qquad
\Omega=\operatorname{diag}(\Sigma_W,\Sigma_{V^M}).
\end{aligned}
\label{eq:augmented-error}
\end{equation}

We can now write the dynamics for the augmented variable $\bar{X}_{t}$ as,

\begin{equation}
\bar{X}_{t+1}
= \bar{A}\bar{X}_t + \bar{B}U_t^M + \bar{G}\xi_t
\label{eq:augmented-state-dynamics}
\end{equation}

\noindent where $\bar{A}, \bar{B}, \bar{G}$ are appropriate constant matrices.

\begin{lemma}[Update Components]
\label{lem:prediction}
Let $S_x=[I_{d_X}\ \ 0]$ and
\[
E^1=\operatorname{col}(I_{d_U^1},0),\qquad
E^2=\operatorname{col}(0,I_{d_U^2}).
\]
There are constant matrices $H^k,N^k,L_e(0),D_e(0)$, such that
\begin{align}
  \eta_t^k&=H^ke_t+N^k\xi_t,
  \label{eq:eta-realization}\\
  a_t(\Theta)&:=U_t^M-KX_t=L_e(\Theta)e_t+D_e(\Theta)\xi_t,
  \label{eq:a-realization}
\end{align}
where
\begin{equation}
\begin{aligned}
  L_e(\Theta)&=L_e(0)+\sum_k E^k\Theta^kH^k,\\
  D_e(\Theta)&=D_e(0)+\sum_k E^k\Theta^kN^k.
\end{aligned}
\label{eq:prediction-matrices}
\end{equation}
Consequently,
\begin{equation}
  \bar X_{t+1}-\mathbb E[\bar X_{t+1}\mid\Delta_t]
  =T_e(\Theta)e_t+G_e(\Theta)\xi_t,
\label{eq:prediction-error}
\end{equation}
where
\begin{equation}
\begin{aligned}
  T_e(\Theta)&=\bar A+\bar B(KS_x+L_e(\Theta)),\\
  G_e(\Theta)&=\bar G+\bar B D_e(\Theta).
\end{aligned}
\label{eq:TG-matrices}
\end{equation}
\end{lemma}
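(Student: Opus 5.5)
The plan is to show that every quantity in the statement is a fixed linear function of $(e_t,\xi_t)$. The central tool is that conditional expectation given $\Delta_t$ is linear on the Gaussian process. Under $\pi(\Theta)$, every action is affine in the primitive randomness $(X_0,W_{0:t},V_{0:t}^M)$. This holds because $U_t^k(0)$ is linear by admissibility and stationarity of $\pi(0)$, and the correction $\Theta^k\eta_t^k$ is linear by construction. Hence $(X_t,\bar X_t,Y_t^M,\Delta_t)$ are jointly Gaussian and $\mathbb E[\cdot\mid\Delta_t]$ is a linear map.

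\textbf{Step 1: the innovation $\eta_t^k$.} For $a\in\{0,\dots,d-1\}$, the observation $Y_{t-a}^k$ is $C^kX_{t-a}+V_{t-a}^k$. The term $Y_{t-a}^k$ with $a\ge1$ is itself a block of $\bar X_t$. For $a=0$ we use $Y_t^k=C^kS_x\bar X_t+V_t^k$, where $V_t^k$ is a block of $\xi_t$. I read $\widehat Y^k_{t-a\mid t}$ as $\mathbb E[Y^k_{t-a}\mid\Delta_t]$. Since $V_t^k$ is independent of $\Delta_t$, subtracting this projection replaces $\bar X_t$ by $e_t$ and leaves the noise term untouched. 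This gives $\eta_t^k=H^ke_t+N^k\xi_t$ with constant selection matrices: $H^k$ stacks $C^kS_x$ and the selectors of $Y^k_{t-a}$ in $\bar X_t$, and $N^k$ selects $V_t^k$ in the $a=0$ row. No $\Theta$-dependence enters because these are pure selections.

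\textbf{Step 2: the baseline deviation $a_t(0)$.} This is the main obstacle. I would argue that the baseline action decomposes as $U_t^M(0)=\mathbb E[U_t^M(0)\mid\Delta_t]+(\text{private part})$. By Assumption~\ref{as:stable-stationary-baseline}, the first term equals $K\widehat X_t$. The private part is a linear function of $\Lambda_t^k$ minus its projection. Since $\Lambda_t^k$ consists of blocks of $\bar X_t$ together with $Y_t^k$, stationarity makes this private part equal to $\tilde L e_t+\tilde D\xi_t$ for constant $\tilde L,\tilde D$. Then
\[
a_t(0)=K\widehat X_t-KX_t+\tilde Le_t+\tilde D\xi_t=(\tilde L-KS_x)e_t+\tilde D\xi_t .
\]
This defines $L_e(0)$ and $D_e(0)$.

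The delicate point is that $\widehat X_t$ must be taken under $\pi(\Theta)$ rather than $\pi(0)$. I would handle it by noting that $\mathbb E[\eta_t^k\mid\Delta_t]=0$. The correction therefore has zero conditional mean, so $\mathbb E_{\pi(\Theta)}[U^M_t\mid\Delta_t]=K\widehat X_t$ is preserved. This is exactly the remark after the assumption that \eqref{eq:steady-state} persists. One also needs that the innovation structure of $\eta$ is unchanged under $\pi(\Theta)$. This follows because the controlled parts of $\bar X_t$ are known to each agent, so they cancel in the innovations.

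\textbf{Step 3: adding the correction and propagating.} Adding $\sum_kE^k\Theta^k\eta_t^k$ and inserting Step 1 gives \eqref{eq:a-realization} with the affine-in-$\Theta$ matrices \eqref{eq:prediction-matrices}. For \eqref{eq:prediction-error}, substitute $U_t^M=KS_x\bar X_t+a_t$ into \eqref{eq:augmented-state-dynamics}:
\[
\bar X_{t+1}=(\bar A+\bar BKS_x)\bar X_t+\bar B L_ee_t+(\bar G+\bar BD_e)\xi_t .
\]
Now take $\mathbb E[\cdot\mid\Delta_t]$. We have $\mathbb E[e_t\mid\Delta_t]=0$ and $\mathbb E[\xi_t\mid\Delta_t]=0$, the latter since $\xi_t$ is independent of $\Delta_t$. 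So the conditional mean is $(\bar A+\bar BKS_x)\widehat{\bar X}_t$. Subtracting yields $(\bar A+\bar B(KS_x+L_e))e_t+(\bar G+\bar BD_e)\xi_t$, which is $T_e(\Theta)e_t+G_e(\Theta)\xi_t$.
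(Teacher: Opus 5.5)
Your proposal is correct and follows essentially the paper's proof: Step~1 is the paper's treatment of $\eta_t^k$; Step~2 spells out the common/private decomposition that the paper compresses into ``substituting \eqref{eq:common-mean}''; and Step~3 is the same conditioning computation, organized through the closed-loop augmented dynamics instead of writing $\bar A e_t+\bar B(U_t^M-K\widehat X_t)+\bar G\xi_t$ directly. One caveat, which the paper shares: the zero conditional mean of $\Theta^k\eta_t^k$ does not by itself give $\mathbb E_{\pi(\Theta)}[U_t^M(0)\mid\Delta_t]=K\,\mathbb E_{\pi(\Theta)}[X_t\mid\Delta_t]$, because the common-information estimator itself changes with $\Theta$, so like the paper you are implicitly assuming the baseline's common part is evaluated with the $\pi(\Theta)$ estimator.
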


\begin{proof}
  Let us first start with the residual $\eta_t^k$. See that for the first component,
  \[
    Y_t^k-\mathbb E[Y_t^k\mid\Delta_t]=C^kS_xe_t+V_t^k,
  \]
  so we see that it can be written as a linear combination of $e_t$ and $\xi_t$. For the remaining components, these are the corresponding values as-is from $e_t$.

  For the action residual $a_t$, note that

  \begin{equation*}
  U_t^M=U_t^M(0)+\sum_k E^k\Theta^k\eta_t^k,
  \end{equation*}

  Substituting this and \eqref{eq:common-mean} into \eqref{eq:a-realization}, we find that

  \begin{equation*}
  a_t=L_e(0)e_t+D_e(0)\xi_t+\sum_k E^k\Theta^k\eta_t^k
  \end{equation*}

  Using the previous equation found for $\eta_t^k$, we obtain the second result of \eqref{eq:a-realization}.

  For the remaining result, we have that

  \[
  \begin{aligned}
  &\bar X_{t+1}-\mathbb E[\bar X_{t+1}\mid\Delta_t]\\
  &=\bar A e_t+\bar B(U_t^M-K\widehat X_t)+\bar G\xi_t\\
  &=(\bar A+\bar BKS_x)e_t+\bar B a_t+\bar G\xi_t\\
  &=T_e(\Theta)e_t+G_e(\Theta)\xi_t.
  \end{aligned}
  \]
\end{proof}

\begin{lemma}[Error Dynamics]
  The dynamics of $e_t$ and its covariance is given by

  \begin{equation}
  \begin{aligned}
  &e_{t+1}=A_ee_t+G_e\xi_t,\\
  &\Sigma_e=A_e\Sigma_eA_e^\top+G_e\Omega G_e^\top,\\
  &A_e\Sigma_eS^\top=0
  \end{aligned}
  \label{eq:error-covariance}
  \end{equation}
  where
  \begin{equation}
  \begin{aligned}
  &A_e:=T_e-K_{\mathrm f}S,\\
  &K_{\mathrm f}:=T_e\Sigma_e S^\top\left(S\Sigma_e S^\top\right)^\dagger.
  \end{aligned}
  \label{eq:error-matrices}
  \end{equation}
\end{lemma}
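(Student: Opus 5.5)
The lemma is a Kalman-filter statement for the innovation process seen by the common information. The plan has three parts: identify the new common observation revealed at time $t+1$, express it as a linear function of the one-step prediction error from Lemma~\ref{lem:prediction}, and apply the conditional Gaussian update.

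\emph{Step 1: the new common observation.} Here $S$ will be the selection matrix that extracts, from the augmented state $\bar X_{t+1}$, the block $(Y^M_{t-d+1},U^M_{t-d+1})$. This is the data that enters $\Delta_{t+1}$ but not $\Delta_t$. Write $Z_{t+1}:=S\bar X_{t+1}$. Then $\Delta_{t+1}$ is generated by $(\Delta_t,Z_{t+1})$.

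Set $\tilde e_{t+1}:=\bar X_{t+1}-\mathbb E[\bar X_{t+1}\mid\Delta_t]$. By \eqref{eq:prediction-error}, $\tilde e_{t+1}=T_ee_t+G_e\xi_t$. Because the policy class \eqref{eq:corrected-policy} is linear in the observations, all variables are jointly Gaussian. Moreover, $e_t$ is independent of $\Delta_t$: it is a Gaussian estimation error, hence orthogonal to, and therefore independent of, the conditioning data. The noise $\xi_t$ is independent of both $e_t$ and $\Delta_t$. Consequently $\tilde e_{t+1}$ is independent of $\Delta_t$, with covariance $T_e\Sigma_eT_e^\top+G_e\Omega G_e^\top$.

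\emph{Step 2: the Gaussian update.} We have $Z_{t+1}-\mathbb E[Z_{t+1}\mid\Delta_t]=S\tilde e_{t+1}$. The standard Gaussian conditioning formula with the pseudoinverse then gives
\[
\mathbb E[\bar X_{t+1}\mid\Delta_{t+1}]=\mathbb E[\bar X_{t+1}\mid\Delta_t]+\mathrm{Cov}(\tilde e_{t+1},S\tilde e_{t+1})\,\mathrm{Cov}(S\tilde e_{t+1})^\dagger S\tilde e_{t+1}.
\]
Hence $e_{t+1}=\tilde e_{t+1}-(\cdots)S\tilde e_{t+1}$.

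The key structural fact concerns the components selected by $S$. These are $Y^M_{t-d+1},U^M_{t-d+1}$, which appear in $\bar X_t$ shifted. They are unaffected by $\xi_t$ and by the current control, so $SG_e=0$ and $S\tilde e_{t+1}=ST_ee_t$. I expect this to follow from the shift structure of $\bar A,\bar B,\bar G$. If it does not hold literally, I would instead define $S$ so that $S$ acting on $e_t$ selects the oldest private block.

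With the gain written through $\Sigma_e$, this is exactly the $K_{\mathrm f}$ of \eqref{eq:error-matrices}, and it yields $e_{t+1}=(T_e-K_{\mathrm f}S)e_t+G_e\xi_t=A_ee_t+G_e\xi_t$. The precise matching of where $S$ acts relative to $T_e$ is the step I expect to be the main obstacle. It depends on the convention for $S$ implicit in the paper, and I would fix it first by writing out $\bar A$ explicitly.

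\emph{Step 3: the covariance equations.} Since $\xi_t$ is independent of $e_t$, taking covariances of $e_{t+1}=A_ee_t+G_e\xi_t$ and using stationarity (the baseline is stationary and the error recursion is stable under Assumption~\ref{as:stable-stationary-baseline}) gives the Lyapunov equation $\Sigma_e=A_e\Sigma_eA_e^\top+G_e\Omega G_e^\top$.

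Finally, consider $A_e\Sigma_eS^\top=T_e\Sigma_eS^\top-T_e\Sigma_eS^\top(S\Sigma_eS^\top)^\dagger S\Sigma_eS^\top$. Since $\Sigma_e\succeq0$, the range of $S\Sigma_e$ equals the range of $S\Sigma_eS^\top$. This gives the pseudoinverse identity $\Sigma_eS^\top(S\Sigma_eS^\top)^\dagger S\Sigma_eS^\top=\Sigma_eS^\top$, so the expression vanishes. This is the orthogonality of the updated error to the newly revealed information.
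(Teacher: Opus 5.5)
There is a genuine gap in Steps 1--2, and it is exactly the point you flagged. The data entering $\Delta_{t+1}$ but not $\Delta_t$ is $(Y^M_{t-d+1},U^M_{t-d+1})$. This is the \emph{oldest} block of $\bar X_t$. It does not appear in $\bar X_{t+1}=\operatorname{col}(X_{t+1},Y^M_t,\ldots,Y^M_{t-d+2},U^M_t,\ldots,U^M_{t-d+2})$ at all, because the shift drops it. So no selector gives $Z_{t+1}=S\bar X_{t+1}$, the innovation is not $S\tilde e_{t+1}$, and your argument for $SG_e=0$ runs the shift in the wrong direction. (The oldest block actually present in $\bar X_{t+1}$ is the data revealed at time $t+2$. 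For $d=2$ it is $(Y^M_t,U^M_t)$, which loads on $\xi_t$, so $SG_e\neq0$ there.) Conditioning $\tilde e_{t+1}$ on $S\tilde e_{t+1}$ therefore gives the error relative to the wrong $\sigma$-algebra. Even granting $SG_e=0$, your gain is $T_e\Sigma_eT_e^\top S^\top(ST_e\Sigma_eT_e^\top S^\top)^\dagger$ applied to $ST_ee_t$. This is not $K_{\mathrm f}Se_t$ in general, since agreement would require rows of $T_e$ to reproduce the oldest block of $e_t$. The sentence ``with the gain written through $\Sigma_e$, this is exactly $K_{\mathrm f}$'' is asserted, not derived.

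Your fallback convention is the right one, but Step 2 then has to be redone. Let $S$ act on $\bar X_t$, so that $\Delta_{t+1}$ is generated by $(\Delta_t,S\bar X_t)$ and the innovation is $S\bar X_t-S\widehat{\bar X}_t=Se_t$. The paper uses the tower property and \eqref{eq:prediction-error} to write $\mathbb E[\bar X_{t+1}\mid\Delta_{t+1}]-\mathbb E[\bar X_{t+1}\mid\Delta_t]=T_e\,\mathbb E[e_t\mid\Delta_{t+1}]+G_e\,\mathbb E[\xi_t\mid\Delta_{t+1}]$. The second term vanishes because $\xi_t$ is independent of everything revealed through time $t-d+1<t$. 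The first equals $T_e\Sigma_eS^\top(S\Sigma_eS^\top)^\dagger Se_t$ by Gaussian conditioning, using your correct observation that $e_t$ is independent of $\Delta_t$. Equivalently, replace $S\tilde e_{t+1}$ by $Se_t$ in your own update formula: $\mathrm{Cov}(\tilde e_{t+1},Se_t)\,\mathrm{Cov}(Se_t)^\dagger Se_t=T_e\Sigma_eS^\top(S\Sigma_eS^\top)^\dagger Se_t$. That is how the gain comes out in terms of $\Sigma_e$, and the fact you need is $\mathrm{Cov}(\xi_t,Se_t)=0$, not $SG_e=0$. Your Step 3 is fine. So is your pseudoinverse argument for $A_e\Sigma_eS^\top=0$, which in fact justifies the identity $\Sigma_eS^\top(S\Sigma_eS^\top)^\dagger S\Sigma_eS^\top=\Sigma_eS^\top$ that the paper uses without comment.
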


\begin{proof}
  See that \eqref{eq:prediction-error} is almost $e_{t+1}$. The left hand side is only off in the conditioning. Let us quantify the difference.

  \[
  \begin{aligned}
  &\mathbb E_{\pi(\Theta)}[\bar X_{t+1}\mid\Delta_{t+1}]-\mathbb E_{\pi(\Theta)}[\bar X_{t+1}\mid\Delta_t]\\
  &=\mathbb E_{\pi(\Theta)}\big[\bar X_{t+1}-\mathbb E_{\pi(\Theta)}[\bar X_{t+1}\mid\Delta_t]\,\big|\,\Delta_{t+1}\big]\\
  &=T_e\,\mathbb E_{\pi(\Theta)}[e_t\mid\Delta_{t+1}]\\
  &\quad+G_e\,\mathbb E_{\pi(\Theta)}[\xi_t\mid\Delta_{t+1}].
  \end{aligned}
  \]

  The second equality comes from the tower property and the last equality comes from \eqref{eq:prediction-error}. See that the second term is zero since the primitive noise is at time $t$ while $\Delta_{t+1}$ has only revealed additional information about the observation and actions at $t - d + 1$. Allowing $S$ to denote the selector matrix selecting the new common information at $t - d + 1$, we can write
  \[
  S\bar X_t-S\widehat{\bar X}_t=Se_t.
  \]

  Recall that both $e_t$ and $S e_t$ conditioned on $\Delta_t$ have multivariate normal distributions. Thus, conditioning on the realization of the latter, we obtain

  \begin{equation}
  \begin{aligned}
  &\mathbb{E}_{\pi(\Theta)}\!\left[e_t \mid \Delta_{t+1}\right]\\
  &=
  \Sigma_e S^\top
  \left(S\Sigma_e S^\top\right)^\dagger S e_t
  \end{aligned}
  \label{eq:conditional-error-mean}
  \end{equation}

  This gives the following equation for the error dynamics:

  \[
  \begin{aligned}
  &e_{t+1}=T_e e_t+G_e\xi_t\\
  &\quad-\big(T_e\,\mathbb E_{\pi(\Theta)}[e_t\mid\Delta_{t+1}]\big)
  \end{aligned}
  \]
  Substituting again gives us the desired error dynamics equation.

  Since are considering the stationary distribution, we have that

  \begin{equation}
  \begin{aligned}
  \Sigma_e
  &=\operatorname{Cov}(e_{t+1})\\
  &=\operatorname{Cov}(A_e e_t+G_e\xi_t)\\
  &=A_e\Sigma_e A_e^\top+G_e\Omega G_e^\top,
  \end{aligned}
  \label{eq:error-covariance-calculation}
  \end{equation}
  where the resulting cross terms disappear since $\operatorname{Cov}(e_t,\xi_t)=0$ due to timing and primitive noise independence, giving us the third equality.

  To see the final identity, see that
  \[
  \begin{aligned}
  A_e\Sigma_eS^\top
  &=(T_e-K_{\mathrm f}S)\Sigma_eS^\top\\
  &=T_e\Sigma_eS^\top
  -T_e\Sigma_eS^\top(S\Sigma_eS^\top)^\dagger
  (S\Sigma_eS^\top)\\
  &=0.
  \end{aligned}
  \]
\end{proof}

This rewriting reveals the second assumption needed for the results that follow.
\begin{assumption}
\label{as:stable-error}
  Assume that
  \begin{equation}
  \rho(A_e(0))<1.
  \end{equation}
\end{assumption}
This assumption tells us that the effect of previous states, errors, and noise from previous times do not persist as we continue in time. Note that $A_e(\Theta)$ is continuous with respect to $\Theta$, so there exists $\epsilon$ such for $|| \Theta || < \epsilon$, we have that $\rho(A_e(\Theta))<1$. We restrict ourselves to such policy updates $\Theta$, and denote the space of such stable policy updates as $\mathcal{S}$. Additionally note that Assumption~\ref{as:stable-stationary-baseline} still holds in a similar, though with the modified estimator, so there is a unique $P_e\succeq0$ satisfying
\begin{equation}
  P_e=L_e^\top\mathcal RL_e+A_e^\top P_eA_e.
\label{eq:adjoint}
\end{equation}

\subsection{Gradient Calculation}

\begin{lemma}[Stationary cost]
\label{lem:cost}
For $\Theta\in\mathcal S$,
\begin{equation}
\begin{aligned}
 J(\Theta)={}&\operatorname{tr}(P\Sigma_W)
       +\operatorname{tr}(\mathcal R L_e\Sigma_eL_e^\top)\\
       &+\operatorname{tr}(\mathcal R D_e\Omega D_e^\top).
\end{aligned}
\label{eq:cost-trace}
\end{equation}
where
\begin{equation}
\mathcal R=R+(B^M)^\top PB^M.
\label{eq:modified-R}
\end{equation}
\end{lemma}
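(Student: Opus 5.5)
We use the standard completion-of-squares identity for LQ problems. It rests on the Riccati pair \eqref{eq:steady-state} together with the action residual $a_t=U_t^M-KX_t$ from Lemma~\ref{lem:prediction}. The plan is to rewrite the stage cost through the value function $X_t^\top P X_t$, so that everything except a noise term $\operatorname{tr}(P\Sigma_W)$ collapses into a quadratic form in $a_t$. That quadratic form is then evaluated using the stationary statistics of $(e_t,\xi_t)$.

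\textbf{Step 1: completion of squares.} Substitute $U_t^M=KX_t+a_t$ into the dynamics, which gives $X_{t+1}=A_cX_t+B^Ma_t+W_t$. Expand $X_{t+1}^\top PX_{t+1}$ and use $P=Q+K^\top RK+A_c^\top PA_c$ and $RK+(B^M)^\top PA_c=0$ to obtain the pathwise identity
\[
X_t^\top QX_t+(U_t^M)^\top RU_t^M = X_t^\top PX_t - X_{t+1}^\top PX_{t+1} + a_t^\top\mathcal R a_t + \text{(terms linear in } W_t) + W_t^\top PW_t .
\]
The cross terms $X_t^\top K^\top Ra_t$ and $X_t^\top A_c^\top PB^Ma_t$ cancel exactly by the second Riccati equation. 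The remaining cross terms $2(A_cX_t+B^Ma_t)^\top PW_t$ have zero expectation: $X_t$ and $a_t$ depend only on information up to time $t$ and on $V_t^M$, all of which are independent of $W_t$.

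\textbf{Step 2: time-average.} Take expectations and average over $t=0,\dots,T-1$. The telescoping term becomes $\frac1T(\mathbb E[X_0^\top PX_0]-\mathbb E[X_T^\top PX_T])$. This vanishes provided $\mathbb E\|X_T\|^2$ stays bounded, which we obtain from stationarity of the policy, $\rho(A_e(\Theta))<1$ for $\Theta\in\mathcal S$, and the stability of $A_c$ implied by the Lyapunov equation for $P$. This is the step I expect to be the main technical obstacle. Boundedness of the state second moment has to be argued from $X_t=\widehat X_t+S_xe_t$: the error part is bounded by Assumption~\ref{as:stable-error}, and the estimate $\widehat X_t$ evolves under $A_c$ driven by innovations. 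If a clean argument is not available, I would simply invoke the stationarity assumed in Assumption~\ref{as:stable-stationary-baseline}, extended to $\pi(\Theta)$ for $\Theta\in\mathcal S$. The remaining noise term gives $\operatorname{tr}(P\Sigma_W)$.

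\textbf{Step 3: evaluate $\mathbb E[a_t^\top\mathcal Ra_t]$.} By \eqref{eq:a-realization}, $a_t=L_e e_t+D_e\xi_t$. The error $e_t$ is determined by noises before time $t$ and by $X_0$, so $\operatorname{Cov}(e_t,\xi_t)=0$, as already used in the error-dynamics lemma. Both have zero mean: $e_t$ is a conditional-mean error and $\xi_t$ is primitive noise. Hence, in stationarity with $\operatorname{Cov}(e_t)=\Sigma_e$,
\[
\mathbb E[a_t^\top\mathcal Ra_t]=\operatorname{tr}(\mathcal RL_e\Sigma_eL_e^\top)+\operatorname{tr}(\mathcal RD_e\Omega D_e^\top).
\]
Since $\rho(A_e)<1$, $\operatorname{Cov}(e_t)\to\Sigma_e$ from any initialization, so the Ces\`aro limit (and hence the $\limsup$) equals this value. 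Summing the three contributions yields \eqref{eq:cost-trace}.
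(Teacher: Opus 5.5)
Your proposal follows the paper's proof essentially step for step. You substitute $U_t^M=KX_t+a_t$, use the Riccati pair \eqref{eq:steady-state} to cancel the $X_t$-quadratic and $X_t$--$a_t$ cross terms so the average cost reduces to $\operatorname{tr}(P\Sigma_W)+\mathbb{E}[a_t^\top\mathcal R a_t]$, and then expand $a_t=L_ee_t+D_e\xi_t$ with $\mathbb{E}[e_t\xi_t^\top]=0$. Your telescoping/Ces\`aro version of Step~2 is a more careful rendering of the paper's direct use of stationarity, $\mathbb{E}[X_t^\top PX_t]=\mathbb{E}[X_{t+1}^\top PX_{t+1}]$.

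Two small corrections to that extra care. First, a solution $P\succeq0$ of $P=Q+K^\top RK+A_c^\top PA_c$ does not by itself force $\rho(A_c)<1$ when only $Q\succeq0$ is assumed; you need $Q\succ0$ or a detectability condition. Second, $\rho(A_e)<1$ alone does not give $\operatorname{Cov}(e_t)\to\Sigma_e$ from an arbitrary start, because the true conditional-mean error evolves with time-varying gains rather than the fixed $K_{\mathrm f}$. Both points should rest on your stationarity fallback, which is exactly what the paper does.
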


\begin{proof}
  Note that $U_t^M(\Theta) = K X_t + a_t(\Theta)$. Plugging back into \eqref{eq:dynamics}, we get
  \[
  \begin{aligned}
  X_{t+1}&=AX_t+B^M\left(KX_t+a_t(\Theta)\right)+W_t\\
  &=A_cX_t+B^Ma_t(\Theta)+W_t,
  \end{aligned}
  \]
  This gives the following cost function
  \[
  J(\Theta)=\lim_{T\to\infty}\frac{1}{T}\mathbb E_{\pi(\Theta)}\left[\sum_{0}^{T-1}\ell_t(\Theta)\right]
  \]
  where the stage cost is given by
  \[
  \begin{aligned}
  \ell_t(\Theta):={}&X_t^\top QX_t+X_t^\top K^\top RKX_t\\
  &+2X_t^\top K^\top Ra_t(\Theta)+a_t(\Theta)^\top Ra_t(\Theta).
  \end{aligned}
  \]

  Note that $X_t$ and $a_t(\Theta)$ have bounded second moments under the linear policy $\pi(\Theta)$ and the gaussianity of the system. By stationarity, we have that the second moments become constant. We also have, by stationarity, that $\mathbb E_{\pi(\Theta)}[X_t^\top PX_t]=\mathbb E_{\pi(\Theta)}[X_{t+1}^\top PX_{t+1}]$. This gives us that
  \[
  \begin{aligned}
  &\lim_{T\to\infty}\frac{1}{T}\mathbb E_{\pi(\Theta)}\left[\sum_{t=0}^{T-1}\ell_t(\Theta)\right]\\
  &=\mathbb E_{\pi(\Theta)}[\ell_t(\Theta)]\\
  &=\mathbb E_{\pi(\Theta)}\big[\ell_t(\Theta)+X_{t+1}^\top PX_{t+1}-X_t^\top PX_t\big].
  \end{aligned}
  \]

  Expanding $X_{t+1}$ and simplifying the expression gives
  \[
  J(\Theta)=\tr(P\Sigma_W)+\mathbb E[a_t^\top\mathcal Ra_t].
  \]
  Expanding $a_t(\Theta)$ and simplifying yields the desired result \eqref{eq:cost-trace}.
\end{proof}

\begin{theorem}[Gradient Calculation]
\label{thm:gradient}
For $ \Theta \in \mathcal S$, the cost is differentiable with respect to $\Theta^k$ and the gradient is given by
\begin{equation}
\begin{aligned}
 \nabla_{\Theta^k}J=2\big(&((E^k)^\top\mathcal R L_e
                   +(F^k)^\top P_e A_e)\Sigma_e(H^k)^\top\\
                 +&((E^k)^\top\mathcal R D_e
                   +(F^k)^\top P_e G_e)\Omega(N^k)^\top\big).
\end{aligned}
\label{eq:gradient}
\end{equation}
where
\[
F^k=\bar BE^k.
\]
\end{theorem}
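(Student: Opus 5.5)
Starting from Lemma~\ref{lem:cost}, $J(\Theta)=\tr(P\Sigma_W)+\tr(\mathcal R L_e\Sigma_eL_e^\top)+\tr(\mathcal R D_e\Omega D_e^\top)$. The first term does not depend on $\Theta$. The dependence enters through $L_e$ and $D_e$, which are affine in $\Theta^k$ by \eqref{eq:prediction-matrices}, and through $\Sigma_e$, which depends on $\Theta$ via the Lyapunov equation \eqref{eq:error-covariance}. I will perturb $\Theta^k\mapsto\Theta^k+\delta$ and compute the first-order variation $dJ$. I will then identify the gradient through $dJ=\tr\big((\nabla_{\Theta^k}J)^\top\delta\big)$.

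First, from \eqref{eq:prediction-matrices} and \eqref{eq:TG-matrices},
\[
dL_e=E^k\delta H^k,\qquad dD_e=E^k\delta N^k,\qquad dT_e=F^k\delta H^k,\qquad dG_e=F^k\delta N^k.
\]
The delicate object is $A_e=T_e-K_{\mathrm f}S$, because $K_{\mathrm f}$ itself depends on $\Sigma_e$ and $T_e$. Rather than differentiate the pseudoinverse directly, I will use the optimality (orthogonal-projection) property of the filter gain. $K_{\mathrm f}$ minimizes $\tr\big(M(T_e-K S)\Sigma_e(T_e-KS)^\top\big)$ over $K$ for any $M\succeq0$. Equivalently, $A_e\Sigma_eS^\top=0$, the last identity of \eqref{eq:error-covariance}. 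This is an envelope-type argument: the terms involving $dK_{\mathrm f}$ enter as $dK_{\mathrm f}\,S\Sigma_e A_e^\top$ and its transpose, and these vanish. Hence effectively $dA_e=F^k\delta H^k$ at first order. I expect this step to be the main obstacle. Justifying that $\Sigma_e$ (and the rank of $S\Sigma_eS^\top$) varies smoothly in $\Theta\in\mathcal S$, so that the pseudoinverse is differentiable, requires care. I would argue that the rank of $S\Sigma_eS^\top$ is locally constant near $\Theta$, since it is determined by $\Sigma_{V^M}\succ0$ blocks. Alternatively, I would restrict to the range of $S\Sigma_eS^\top$.

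Next, differentiating the Lyapunov equation gives
\[
d\Sigma_e=A_e\,d\Sigma_e\,A_e^\top+\Phi+\Phi^\top,\qquad \Phi:=dA_e\Sigma_eA_e^\top+dG_e\Omega G_e^\top.
\]
Here $\rho(A_e)<1$ on $\mathcal S$ guarantees a unique solution that is smooth in $\Theta$. The cost variation is
\[
dJ=2\tr(\mathcal R\,dL_e\Sigma_eL_e^\top)+\tr(L_e^\top\mathcal R L_e\,d\Sigma_e)+2\tr(\mathcal R\,dD_e\Omega D_e^\top).
\]
To eliminate $d\Sigma_e$, I will use the adjoint equation \eqref{eq:adjoint}. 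The standard duality for Stein equations gives $\tr(L_e^\top\mathcal RL_e\,d\Sigma_e)=\tr\big(P_e(\Phi+\Phi^\top)\big)=2\tr(P_e\Phi)$. This follows by writing both sides as convergent series $\sum_j (A_e^\top)^j(\cdot)A_e^j$.

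Finally, I substitute the expressions for $dL_e$, $dD_e$, $dA_e$, and $dG_e$ and use cyclicity of the trace. The $\Sigma_e$ contributions are $2\tr\big(\delta H^k\Sigma_e(L_e^\top\mathcal RE^k+A_e^\top P_eF^k)\big)$. The $\Omega$ contributions are $2\tr\big(\delta N^k\Omega(D_e^\top\mathcal RE^k+G_e^\top P_eF^k)\big)$. Reading off the coefficient of $\delta$ yields exactly \eqref{eq:gradient}. Differentiability then follows because every ingredient is a smooth function of $\Theta$ on $\mathcal S$.
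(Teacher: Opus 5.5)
Your proposal is correct and follows essentially the same route as the paper: perturb $\Theta^k$, drop the $\delta K_{\mathrm f}$ terms in $\delta\Sigma_e$ using $A_e\Sigma_eS^\top=0$, eliminate $\delta\Sigma_e$ with the adjoint equation \eqref{eq:adjoint}, and read off the coefficient of $\delta\Theta^k$. Your remarks on the constant rank of $S\Sigma_eS^\top$ and the smoothness of $\Sigma_e$ go beyond the paper, which takes differentiability for granted.
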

\begin{proof}
  Consider a perturbation $\delta \Theta^k$ in the policy. From \eqref{eq:prediction-matrices}, \eqref{eq:TG-matrices}, and \eqref{eq:error-matrices},
  \[
  \begin{aligned}
  \delta L_e&=E^k(\delta\Theta^k)H^k, & \delta D_e&=E^k(\delta\Theta^k)N^k,\\
  \delta A_e&=F^k(\delta\Theta^k)H^k-(\delta K_{\mathrm f})S, & \delta G_e&=F^k(\delta\Theta^k)N^k.
  \end{aligned}
  \]
  From \eqref{eq:error-covariance},
  \[
  \begin{aligned}
  \delta\Sigma_e={}&A_e(\delta\Sigma_e)A_e^\top\\
  &+(\delta T_e)\Sigma_eA_e^\top+A_e\Sigma_e(\delta T_e)^\top\\
  &-(\delta K_{\mathrm f})S\Sigma_eA_e^\top-A_e\Sigma_eS^\top(\delta K_{\mathrm f})^\top\\
  &+(\delta G_e)\Omega G_e^\top+G_e\Omega(\delta G_e)^\top.
  \end{aligned}
  \]
  But note that from \eqref{eq:error-covariance}, we have
  \[
  \begin{aligned}
  \delta\Sigma_e={}&A_e(\delta\Sigma_e)A_e^\top+(\delta T_e)\Sigma_eA_e^\top+A_e\Sigma_e(\delta T_e)^\top\\
  &+(\delta G_e)\Omega G_e^\top+G_e\Omega(\delta G_e)^\top.
  \end{aligned}
  \]

  Differentiating the terms in \eqref{eq:cost-trace}, we get
  \[
  \begin{aligned}
  \delta\tr(P\Sigma_W)&=0,\\[3pt]
  \delta\tr(\mathcal RL_e\Sigma_eL_e^\top)&=2\tr\big(\mathcal RL_e\Sigma_e(\delta L_e)^\top\big)\\
  &\quad+\tr\big(L_e^\top\mathcal RL_e\,\delta\Sigma_e\big)\\
  &=2\tr\big(\mathcal RL_e\Sigma_e(\delta L_e)^\top\big)\\
  &\quad+2\tr\big(P_eA_e\Sigma_e(\delta T_e)^\top\big)\\
  &\quad+2\tr\big(P_eG_e\Omega(\delta G_e)^\top\big),\\[3pt]
  \delta\tr(\mathcal RD_e\Omega D_e^\top)&=2\tr\big(\mathcal RD_e\Omega(\delta D_e)^\top\big).
  \end{aligned}
  \]
  The use of \eqref{eq:adjoint} giving the equality for the second differential term. Adding up the terms gives us that
  \[
  \begin{aligned}
  \delta J=2\Big(&\tr\big(\mathcal RL_e\Sigma_e(\delta L_e)^\top\big)+\tr\big(\mathcal RD_e\Omega(\delta D_e)^\top\big)\\
  &+\tr\big(P_eA_e\Sigma_e(\delta T_e)^\top\big)+\tr\big(P_eG_e\Omega(\delta G_e)^\top\big)\Big).
  \end{aligned}
  \]

  Substituting the perturbations found above, we can rearrange and factor out $\delta \Theta^k$. Using the identity that
  \[
  \delta J=\tr\Big(\big(\nabla_{\Theta^k}J(\Theta)\big)^\top\delta\Theta^k\Big)
  \]
  and by identifying the corresponding factor, we have shown \eqref{eq:gradient}.
\end{proof}

We now have the explicit calculation of the gradient of $J(\Theta)$ with respect to private information linear policy update of team member $k$. Before we continue to the numerical results, we will first consider this problem from the least squares perspective to see how to interpret the direction of improvement but as well provide an alternative method of implementation.

\subsection{Least-Squares Formulation}
For the policy update $\Theta$, define the following quantities:
\begin{equation}
  r_t^k := (E^k)^\top \mathcal{R}\,a_t(\Theta)
    + (F^k)^\top P_e e_{t+1},
  \label{eq:cost-residual}
\end{equation}
\begin{equation}
  \Sigma_{\eta^k}
  := \mathbb{E}_{\pi(\Theta)}\!\left[
       \eta_t^k (\eta_t^k)^\top \right].
  \label{eq:innovation-covariance}
\end{equation}

\begin{corollary}
\label{cor:least-squares}
Suppose $\Sigma_{\eta^k} \succ 0$. Then
\begin{equation}
\begin{aligned}
  \mathcal C^k(\Theta)&:=\operatorname*{arg\,min}_{\mathcal C}\mathbb E_{\pi(\Theta)}\big\|r_t^k-\mathcal C\eta_t^k\big\|_2^2\\
  &=\frac{1}{2}\nabla_{\Theta^k}J(\Theta)\,\Sigma_{\eta^k}^{-1}.
\end{aligned}
  \label{eq:least-squares}
\end{equation}
\end{corollary}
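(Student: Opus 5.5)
The plan is to treat the least-squares problem in \eqref{eq:least-squares} as an ordinary quadratic in $\mathcal C$ and solve its normal equations. Expanding the expectation gives
\[
\mathbb E\|r_t^k-\mathcal C\eta_t^k\|_2^2=\mathbb E\|r_t^k\|_2^2-2\tr\big(\mathcal C\,\mathbb E[\eta_t^k(r_t^k)^\top]\big)+\tr\big(\mathcal C\Sigma_{\eta^k}\mathcal C^\top\big).
\]
Since $\Sigma_{\eta^k}\succ0$, this is strictly convex in $\mathcal C$. Its unique minimizer is
\[
\mathcal C^k=\mathbb E[r_t^k(\eta_t^k)^\top]\,\Sigma_{\eta^k}^{-1}.
\]
The corollary therefore reduces to the cross-covariance identity
\[
\mathbb E_{\pi(\Theta)}[r_t^k(\eta_t^k)^\top]=\tfrac12\nabla_{\Theta^k}J(\Theta).
\]
All expectations are taken under the stationary distribution, which exists because $\Theta\in\mathcal S$.

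To compute this cross-covariance, I would substitute the realizations from Lemma~\ref{lem:prediction} and the Error Dynamics lemma:
\[
a_t=L_ee_t+D_e\xi_t,\qquad e_{t+1}=A_ee_t+G_e\xi_t,\qquad \eta_t^k=H^ke_t+N^k\xi_t.
\]
Then $r_t^k=\big((E^k)^\top\mathcal RL_e+(F^k)^\top P_eA_e\big)e_t+\big((E^k)^\top\mathcal RD_e+(F^k)^\top P_eG_e\big)\xi_t$. Multiplying by $(\eta_t^k)^\top$ and taking expectations uses three facts:
\begin{itemize}
\item $\mathbb E[e_te_t^\top]=\Sigma_e$;
\item $\mathbb E[\xi_t\xi_t^\top]=\Omega$;
\item $\mathbb E[e_t\xi_t^\top]=0$, because $e_t$ depends only on primitives up to time $t-1$ and the initial state, while $\xi_t$ is independent of them (the same argument used for the cross terms in \eqref{eq:error-covariance-calculation}).
\end{itemize}
We also need $e_t$ to be zero mean, so that second moments equal covariances. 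This holds because $e_t$ is a conditional-expectation error.

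With these facts the expectation equals
\[
\big((E^k)^\top\mathcal RL_e+(F^k)^\top P_eA_e\big)\Sigma_e(H^k)^\top+\big((E^k)^\top\mathcal RD_e+(F^k)^\top P_eG_e\big)\Omega(N^k)^\top.
\]
This is exactly half of the right side of \eqref{eq:gradient} in Theorem~\ref{thm:gradient}, which gives the claim.

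The main point needing care is the independence and zero-mean bookkeeping for $e_t$ versus $\xi_t$. In particular, $\eta_t^k$ contains $V_t^k$ through $N^k\xi_t$, and $e_{t+1}$ in $r_t^k$ contains $G_e\xi_t$. The $\xi_t$-terms therefore do not vanish, but they pair only with each other and contribute exactly the $\Omega(N^k)^\top$ term. A secondary check is that \emph{stationary} $\Sigma_e$ in the definition of $\Sigma_{\eta^k}$ coincides with the $\Sigma_e$ of \eqref{eq:error-covariance}. This holds because $\rho(A_e)<1$ makes the Lyapunov equation's solution unique.
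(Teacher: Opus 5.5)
Your proposal is correct and matches the paper's proof step for step: reduce to the normal equation $\mathcal C\Sigma_{\eta^k}=\mathbb E[r_t^k(\eta_t^k)^\top]$, substitute $a_t=L_ee_t+D_e\xi_t$, $e_{t+1}=A_ee_t+G_e\xi_t$, $\eta_t^k=H^ke_t+N^k\xi_t$, drop the cross terms using $\mathbb E[e_t\xi_t^\top]=0$, and recognize half of \eqref{eq:gradient}. Your closing ``secondary check'' is unnecessary, since the paper defines $\Sigma_e$ as the stationary second moment of $e_t$; its Lyapunov-uniqueness justification is also loose, because $A_e$ depends on $\Sigma_e$ through $K_{\mathrm f}$, but this does not affect the argument.
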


\begin{proof}
  First note that since $\Sigma_{\eta^k} \succ 0$, the least-squares problem has a unique minimizer and the corresponding normal equation is given by
  \[
  \mathcal C\Sigma_{\eta^k}=\mathbb E_{\pi(\Theta)}\big[r_t^k(\eta_t^k)^\top\big].
  \]
  Expanding, we get that
  \[
  \begin{aligned}
  \mathbb E[r_t^k(\eta_t^k)^\top]={}&(E^k)^\top\mathcal R\,\mathbb E[a_t(\eta_t^k)^\top]\\
  &+(F^k)^\top P_e\,\mathbb E[e_{t+1}(\eta_t^k)^\top].
  \end{aligned}
  \]
  Expanding the first term provides
  \[
  \begin{aligned}
  \mathbb E[a_t(\eta_t^k)^\top]={}&L_e\mathbb E[e_te_t^\top](H^k)^\top\\
  &+L_e\mathbb E[e_t\xi_t^\top](N^k)^\top\\
  &+D_e\mathbb E[\xi_te_t^\top](H^k)^\top\\
  &+D_e\mathbb E[\xi_t\xi_t^\top](N^k)^\top
  \end{aligned}
  \]
  and the second term provides
  \[
  \begin{aligned}
  \mathbb E[e_{t+1}(\eta_t^k)^\top]={}&A_e\mathbb E[e_te_t^\top](H^k)^\top\\
  &+A_e\mathbb E[e_t\xi_t^\top](N^k)^\top\\
  &+G_e\mathbb E[\xi_te_t^\top](H^k)^\top\\
  &+G_e\mathbb E[\xi_t\xi_t^\top](N^k)^\top.
  \end{aligned}
  \]

  Note that $e_t$ depends only on primitive noise up to time $t - 1$ but $\xi_t$ only contains the primitive noise at time $t$, so $\mathbb E[e_t\xi_t^\top]=0$. Simplifying,
  \[
  \begin{aligned}
  &\mathbb E[a_t(\eta_t^k)^\top]=L_e\Sigma_e(H^k)^\top+D_e\Omega(N^k)^\top,\\
  &\mathbb E[e_{t+1}(\eta_t^k)^\top]=A_e\Sigma_e(H^k)^\top+G_e\Omega(N^k)^\top.
  \end{aligned}
  \]

  Thus, we see that
  \begin{equation}
  \mathcal C\Sigma_{\eta^k}=\mathbb E_{\pi(\Theta)}\big[r_t^k(\eta_t^k)^\top\big]=\frac{1}{2}\nabla_{\Theta^k}J(\Theta).
  \label{eq:normal-equation}
  \end{equation}
  and rearranging provides \eqref{eq:least-squares}
\end{proof}

The quantity $r_t^k$ signals in some sense what alternative action direction would have been preferable, where the first term corresponds to consideration for the immediate cost and the second term contains $P_e$ which corresponds to consideration for future costs. Due the the policy improvement class, we can only try to best choose $\mathcal{C}$ to align $\mathcal{C} \eta_t^k$.

\section{Numerical Results}
\label{sec:numerical}

\subsection{Implementation}
Using the least-squares interpretation, we can calculate the gradient direction. Since we know that we are stable within a small neighborhood around our current policy, we can scale our update. See Algorithm 1.

\begin{algorithm}[t]
\caption{Stationary policy improvement}
\label{alg:improvement}
\small
\begin{algorithmic}[1]
\Require Baseline policy satisfying
    Assumption~\ref{as:stable-stationary-baseline} and
    Assumption~\ref{as:stable-error},
    starting step size $\overline\alpha>0$,
    minimum step size $\alpha_{\min}>0$,
    step size reduction factor $\beta\in(0,1)$,
    minimum improvement threshold factor $c\in(0,1)$,
    minimum gradient size $\varepsilon>0$
\State $\Theta\gets0$
\Loop
    \State Form necessary matrices from \eqref{eq:prediction-matrices},
        \eqref{eq:TG-matrices}, \eqref{eq:error-covariance},
        \eqref{eq:error-matrices}, and \eqref{eq:adjoint}
    \State Calculate policy cost $J(\Theta)$ using \eqref{eq:cost-trace}
    \State Compute $\nabla J(\Theta)$ using \eqref{eq:gradient}
    \If{$\|\nabla J(\Theta)\|_F\leq\varepsilon$}
        \State \Return $\Theta$
    \EndIf
    \For{$k\in\{1,2\}$}
        \State Compute $\Sigma_{\eta^k}$ using
            \eqref{eq:eta-realization} and
            \eqref{eq:innovation-covariance}
        \State Solve \eqref{eq:normal-equation}
            for $\mathcal C^k$
    \EndFor
    \State $p\gets-\{\mathcal C^1,\mathcal C^2\}$, $\alpha\gets\overline\alpha$
    \While{$\Theta+\alpha p\notin\mathcal S$, or fails
        $J(\Theta+\alpha p)
        \leq J(\Theta)
        +c\alpha\langle\nabla J(\Theta),p\rangle$}
        \State $\alpha\gets\beta\alpha$
        \If{$\alpha<\alpha_{\min}$}
            \State \Return $\Theta$
        \EndIf
    \EndWhile
    \State $\Theta\gets\Theta+\alpha p$
\EndLoop
\end{algorithmic}
\end{algorithm}

\subsection{Example 1}
Consider the following system:
\[
A=\begin{bmatrix}
1.15 & 0.25\\
0.25 & 1.15
\end{bmatrix},
\qquad
B^1=\begin{bmatrix}0.6\\1\end{bmatrix},
\qquad
B^2=\begin{bmatrix}1\\0.6\end{bmatrix},
\]
\[
C^1=\begin{bmatrix}1&0\end{bmatrix},
\qquad
C^2=\begin{bmatrix}0&1\end{bmatrix},
\]
\[
\Sigma_W=4I_2,
\qquad
\Sigma_{V^M}=0.5I_2,
\]
\[
Q=I_2,
\qquad
R=0.04I_2,
\qquad
d=5.
\]

To construct the baseline policy $\pi(0)$, we use the method outlined in \cite{xu2023delayed}. We confirm that Assumption~\ref{as:stable-stationary-baseline} holds with

\[
P \approx
\begin{bmatrix}
1.0984673400 & -0.0682999027\\
-0.0682999027 & 1.0984673400
\end{bmatrix},
\]
\[
K \approx
\begin{bmatrix}
0.4955212920 & -1.3574480716\\
-1.3574480716 & 0.4955212920
\end{bmatrix}.
\]
\[
A_c=A+B^M K \approx
\begin{bmatrix}
0.0898647036 & -0.0689475510\\
-0.0689475510 & 0.0898647036
\end{bmatrix}.
\]
\[
B^M=\begin{bmatrix}0.6&1\\1&0.6\end{bmatrix},
\qquad Q=I_2,
\qquad R=0.04I_2.
\]
\[
\rho(A_c)\approx0.1588122546<1.
\]

We confirm that Assumption~\ref{as:stable-error} holds with
\[
\rho(A_e(0))=0.413740.
\]

From Fig.~\ref{fig:example1}, we see a $28.2348\%$ reduction from the baseline cost.

\begin{figure}[t]
  \centering
  \includegraphics[width=\columnwidth]{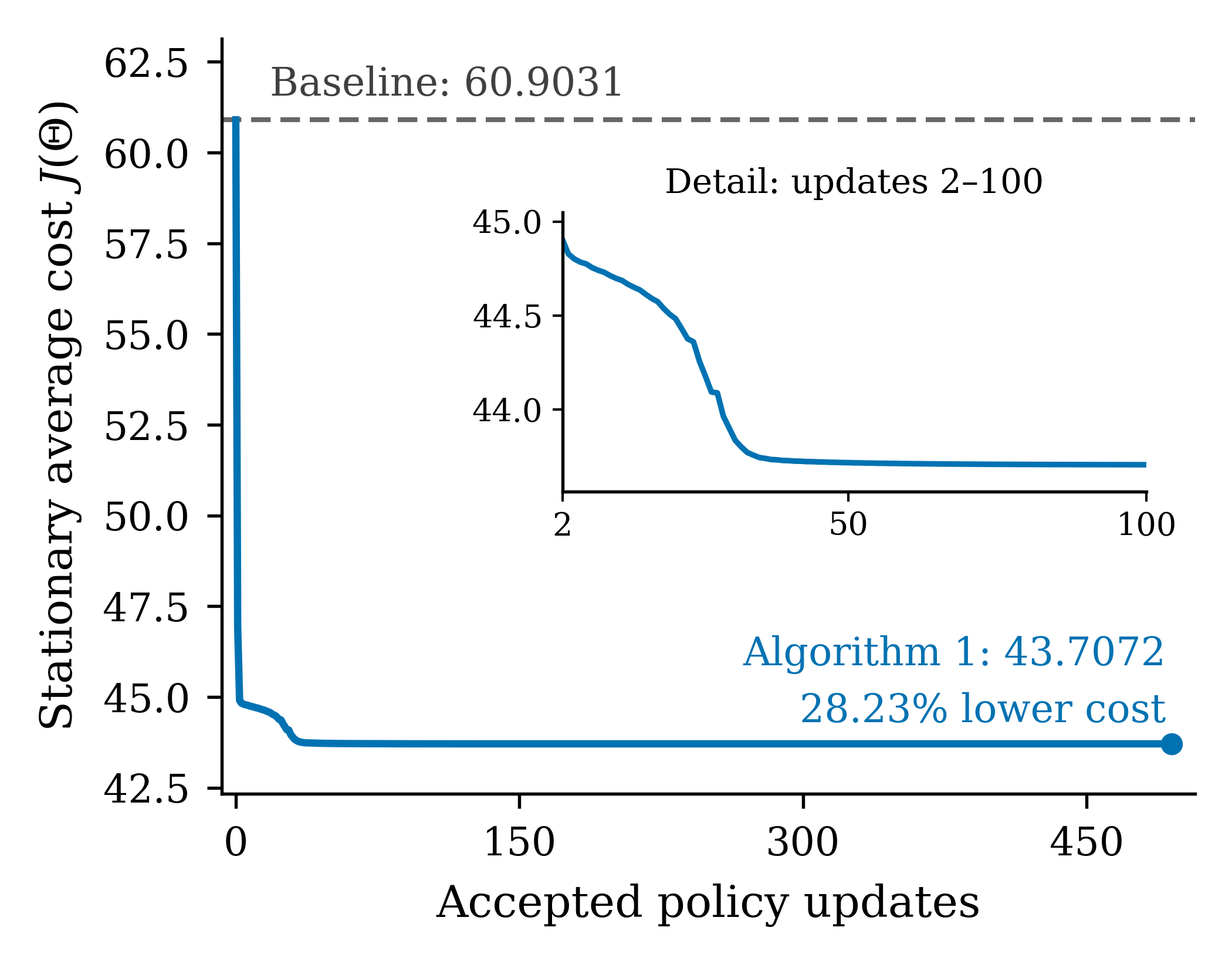}
  \caption{}
  \label{fig:example1}
\end{figure}

\subsection{Example 2}
Consider the following system:
\[
A=\begin{bmatrix}
1.08 & 0.18 & 0.18\\
0.18 & 1.08 & 0.18\\
0.18 & 0.18 & 1.08
\end{bmatrix},
\]
\[
B^1=\begin{bmatrix}0.6\\1\\1\end{bmatrix},
\qquad
B^2=\begin{bmatrix}1\\0.6\\1\end{bmatrix},
\qquad
B^3=\begin{bmatrix}1\\1\\0.6\end{bmatrix},
\]
\[
C^1=\begin{bmatrix}1&0&0\end{bmatrix},
\quad
C^2=\begin{bmatrix}0&1&0\end{bmatrix},
\quad
C^3=\begin{bmatrix}0&0&1\end{bmatrix},
\]
\[
\Sigma_W=4I_3,
\qquad
\Sigma_{V^M}=0.5I_3,
\]
\[
Q=I_3,
\qquad
R=0.04I_3,
\qquad
d=5.
\]

We have that Assumption~\ref{as:stable-stationary-baseline} holds with
{%
\setlength{\arraycolsep}{2pt}
\[
P \approx
\begin{bmatrix}
1.1152443324 & -0.0515229103 & -0.0515229103\\
-0.0515229103 & 1.1152443324 & -0.0515229103\\
-0.0515229103 & -0.0515229103 & 1.1152443324
\end{bmatrix},
\]
\[
K \approx
\begin{bmatrix}
1.0517704858 & -0.8011988778 & -0.8011988778\\
-0.8011988778 & 1.0517704858 & -0.8011988778\\
-0.8011988778 & -0.8011988778 & 1.0517704858
\end{bmatrix},
\]
\[
\begin{aligned}
&A_c=A+B^M K\\
&\approx
\begin{bmatrix}
0.1086645358 & -0.0501477188 & -0.0501477188\\
-0.0501477188 & 0.1086645358 & -0.0501477188\\
-0.0501477188 & -0.0501477188 & 0.1086645358
\end{bmatrix}.
\end{aligned}
\]
}
\[
B^M=
\begin{bmatrix}
0.6&1&1\\
1&0.6&1\\
1&1&0.6
\end{bmatrix},
\qquad
Q=I_3,
\qquad
R=0.04I_3.
\]
\[
\lambda_{\min}(P)\approx1.01220>0,
\qquad
\rho(A_c)\approx0.158812<1.
\]

We have that Assumption~\ref{as:stable-error} holds with
\[
\rho(A_e(0))\approx0.320423.
\]

This example highlights that, though derived for two-member teams, the method is applicable to larger teams. See Figure 2.

\begin{figure}[t]
  \centering
  \includegraphics[width=\columnwidth]{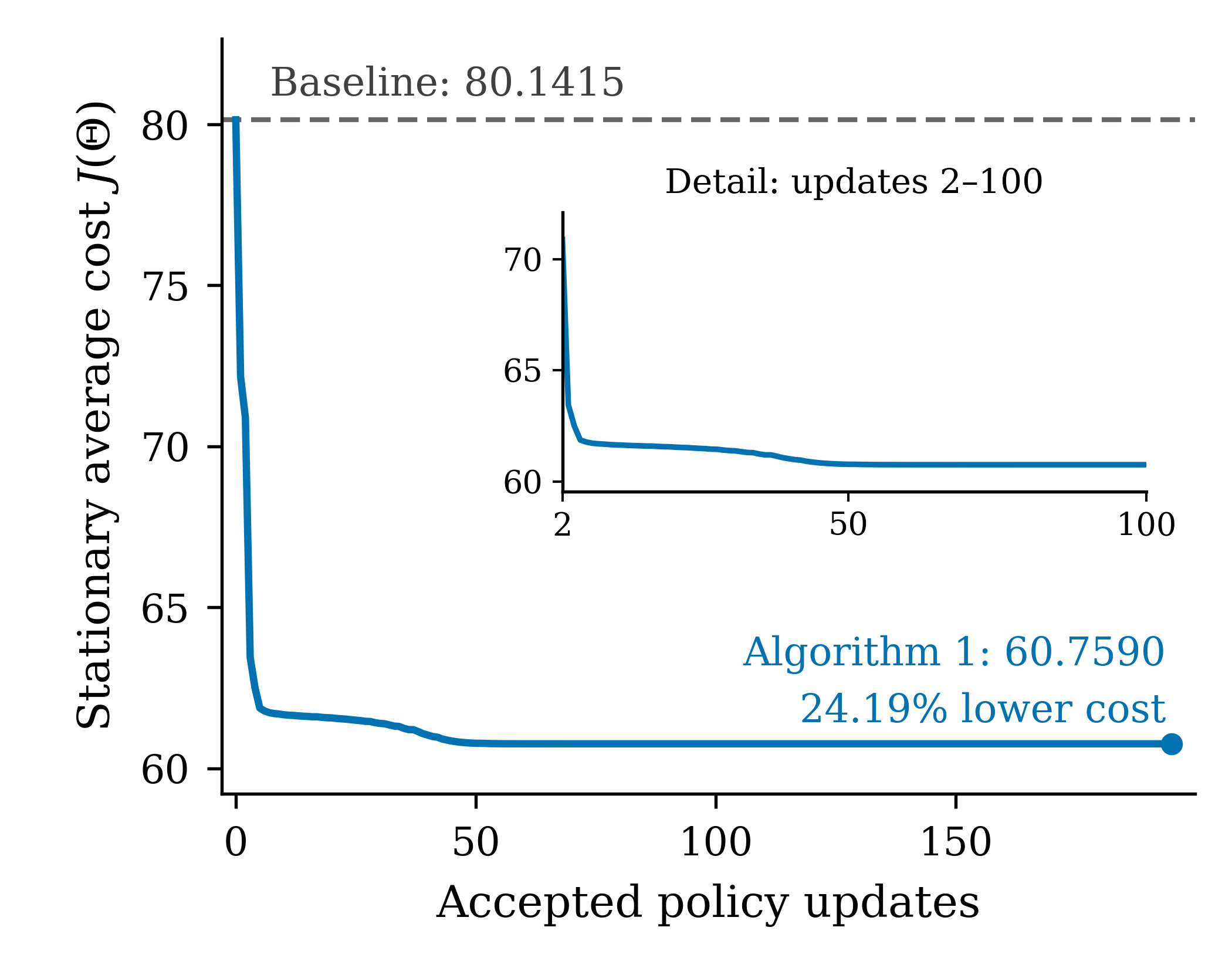}
  \caption{}
  \label{fig:example2}
\end{figure}

\section{Conclusion}
\label{sec:conclusion}
We have found that given appropriate candidate policies, the gradient of the cost function with respect to individual team members' gain on private information can be analyically computed. This method provides a way to check for policy improvements from information.

This method is limited in the fact that we cannot find policy improvements that rely on policy changes among multiple members, and requires assumptions on the baseline policy. Future works looks towards increasing the policy improvement class and alternative assumptions on the baseline policy.

\section*{Acknowledgement of Language Model Usage}

OpenAI Codex \cite{openai2025codex} and Claude Code \cite{anthropic2025claudecode} were used in the literature review, coding, and editing process.

\end{document}